\documentclass[10pt,a4paper,twocolumn]{IEEEtran}

\usepackage{amsopn,amsmath,amssymb,amsfonts,bm,amsthm}
\usepackage{graphicx,subfigure}
\usepackage{enumerate}
\usepackage{cite}
\usepackage{flushend}
\usepackage{etex}
\usepackage{algorithm,algorithmic}
\usepackage{tikz,pgfplots}
\usepackage[margin=16mm,top=20mm,bottom=16mm]{geometry}
\setlength{\textfloatsep}{10pt plus 1.0pt minus 2.0pt}

\newcommand{\figref}[1]{Fig.\,\ref{#1}}

\def\a{\alpha}
\def\mA{\boldsymbol A}
\def\Dmin{D_{\min}}
\def\Dr{D_\mathrm{RLNC}}
\def\Du{\underline{D}}
\def\E{\mathcal E}
\def\e{\boldsymbol e}

\def\F{\mathbb F}

\def\G{\mathcal G}
\def\mH{\mathcal H}
\def\M{\mathcal M}

\def\P{\mathcal{P}}
\def\p{\mathbf{p}}
\def\R{\mathcal{R}}
\def\r{\mathbf r}
\def\S{\mathcal S}
\def\w{\mathcal W}

\def\V{\mathcal V}
\def\v{\boldsymbol v}

\newtheorem{Lemma}{\textbf{Lemma}}
\newtheorem{Theorem}{\textbf{Theorem}}

\newtheorem{Definition}{\textbf{Definition}}

\newtheorem{Conjecture}{Conjecture}

  {\proof}{\proofend}

\pagestyle{empty}
\title{\huge{On Minimizing the Average Packet Decoding Delay in Wireless Network Coded Broadcast}}
\author{Mingchao Yu*, Alex Sprintson$^\dag$, and Parastoo Sadeghi* \\ \small{*Research School of Engineering, Australian National University, Canberra, Australia\\$^\dag$Department of Electrical and Computer Engineering, Texas A\&M University, Texas, USA\\ \texttt{Emails: \{ming.yu,parastoo.sadeghi\}@anu.edu.au,~spalex@tamu.edu}}\\
\vspace{-2em}}
\begin{document}
\maketitle
\thispagestyle{empty}
\vspace{-5em}
\begin{abstract}
We consider a setting in which a sender wishes to broadcast a block of $K$ data packets to a set of wireless receivers, where each of the receivers has a subset of the data packets already available to it (e.g., from prior transmissions) and wants the rest of the packets. Our goal  is to find a linear network coding scheme that yields the minimum average packet decoding delay (APDD), i.e., the average time it takes for a receiver to decode a data packet. Our contributions can be summarized as follows. First, we prove that this problem is NP-hard by presenting a reduction from the hypergraph coloring problem. Next, we show that 
a random linear network coding (RLNC) provides an approximate solution to this problem with approximation ratio $2$ with high probability. Next, we present a methodology for designing specialized approximation algorithms for this problem that outperform RLNC solutions while maintaining the same throughput. In a special case of practical interest with a small number of wanted packets our solution can achieve an approximation ratio $\frac{4-2/K}{3}$. Finally, we conduct an experimental study that demonstrates the advantages of the presented methodology.
	


\end{abstract}

\begin{keywords}
Network coding, decoding delay, NP-hardness, approximation algorithm.
\end{keywords}
\section{Introduction}

In this paper, we are interested in a wireless broadcast scenario, in which a sender wishes to broadcast a block of $K$ data packets to a set of wireless receivers, such that each of the receivers already has a subset of the data packets available to it (e.g., from prior transmissions) and is interested in obtaining the rest of the packets. Given a packet reception instance, the goal is to design a linear network coding (NC) scheme that minimizes the average packet decoding delay (APDD), which is defined as the average time it takes for a receiver to decode a data packet.

One of the possible solutions to this problem is to employ
a random linear  network coding (RLNC) technique \cite{ho:medard:koetter:karger:effros:2006,nistor:lucani:vinhoza:costa:barros:2011}. In wireless broadcast scenarios, RLNC can achieve an optimal throughput (i.e., minimize the time required to decode all packets by all receivers) with high probability by mixing all data packets in the block together using linear coefficients randomly chosen from a sufficiently large finite field. However, RLNC is suboptimal in terms of APDD, since in general, no data packet can be decoded by a receiver until it receives $K$ linearly independent coded packets.

%

Many opportunistic NC techniques have been developed with the aim to reduce APDD or some other measures of decoding delay \cite{katti1;etal:2008,Rozner_Heuristic_clique,costa:munaretto:widmer:baros:2008,keller:drinea:fragouli:2008,
eryilmaz:ozdaglar:medard:ahmed:2008,barros:costa:munaretto:widmer:2009,sundararajan:sadeghi:medard:2009,
nguyen:tran:nguyen:bose:2009,sadeghi:shams:traskov:2010,sorour:valaee:2010,nistor:lucani:vinhoza:costa:barros:2011,
li:idnc_video:2011,athanasiadou2013stable,yu:parastoo:neda:2014,parastoo:yu:neda:isita,fu2014dynamic}.  An important technique in this class is \emph{instantly decodable network coding} (IDNC). The IDNC technique has a potential to reduce the APDD by enabling a subset of receivers to instantly decode a data packet after each  transmission. IDNC has been shown to outperform RLNC in terms of APDD for a small number of receivers \cite{yu:parastoo:neda:2014}.  However, since in IDNC schemes a single transmission typically benefits only some of the receivers, IDNC is not throughput optimal. As a result, a larger number of transmissions is necessary to finish the broadcast, which increases the decoding delay for some receivers, and, as a result, increases the value of APDD. Indeed, for larger number of receivers, the throughput of IDNC decreases and APDD increases due to lack of coding opportunities.  A similar behavior can be observed for other opportunistic coding techniques \cite{costa:munaretto:widmer:baros:2008}.

In summary, there is no clear winner between RLNC, IDNC, and other opportunistic techniques, as each of them prevail in a different  parameter region. Moreover, while the APDD of RLNC can be easily calculated (as shown in Section~\ref{sec:approx}), the achievable APDD of opportunistic NC techniques has not been characterized analytically.



The contributions of this paper is summarized as follows:
\begin{itemize}
\item We first prove that it is NP-hard to minimize APDD, by presenting a reduction from the hypergraph coloring problem. 
\item Next we show that RLNC achieves an approximation ratio of 2 with high probability, i.e., the APDD achieved by RLNC is at most two times the optimal solution.
\item We present a methodology for designing specialized approximation algorithms that achieve lower values of APDD than RLNC while maintaining the same optimal throughput. We also present a case study to demonstrate the algorithm design. We conduct extensive simulations to confirm that our methodology outperforms alternative solutions in the broad range of practical settings. 


\end{itemize}


\section{System Model}
Our model includes a single sender that holds a set of $K$ data packets that belong to $\F_q$, $\P=\{\p_k\}_{k=1}^K$, and a set of $N$ receivers, $\{\r_n\}_{n=1}^N$, each wants a subset $\w_n$ of $\P$ and has the rest.
The packet reception instance in our model is represented by a binary $N\times K$ state feedback matrix (SFM) $\mA$, where $\mA(n,k)=1$ means that $\r_n$ wants $\p_k$, and $\mA(n,k)=0$ means that receiver $r_n$ has packet $p_k$ already available to it (e.g., from prior transmissions). We denote by $w_n$ the size of $\w_n$, and by $t_k$ the number of receivers who want $\p_k$. An example of SFM is given in \figref{fig:sfm_graph}(b), which has $w_1=3$ and $t_1=2$.

Given $\mA$, the sender performs a linear NC transmission phase. In each NC transmission, the sender encodes data packets in $\M$ together using linear coefficients from a finite field $\F_q$. The corresponding packet $X$ takes the form of:
\begin{equation}
X=\sum_{\p_k\in\M}\beta_k\p_k
\end{equation}
We denote by $\M=\{p_k\in \P| \beta_k\neq 0\}$ the support of $X$ and refer to it as an \emph{coding set} of $X$.
When the coefficients $\{\beta_k\}$ are chosen from $\F_q$ uniformly at random, $X$ is called a random-coded packet of $\M$. A receiver will increase its degree of freedom (DoF) by one when it receives a NC packet that is linearly independent of the set of all the packets it already has. The broadcast will be \emph{completed} at a receiver once it decodes all its wanted data packets.

In order to study the global minimum decoding delay of linear NC in wireless broadcast, we assume the following:
\begin{enumerate}
\item NC transmissions are erasure-free, so that every transmitted NC packet can be received by all receivers;
\item Receivers have sufficient computational resources to perform NC decoding under any $\F_q$. When random coding is applied, a sufficiently large $\F_q$ will ensure the linear independency among the random-coded packets with high probability.
\end{enumerate}

A set of $U$ coded packets is called a \emph{NC solution} and is denoted by $\S$ if it allows
every receiver to decode all its wanted packets. Let $u_{n,k}$ be the index of the NC transmission at which $\r_n$ decodes $\p_k$. The average packet decoding delay (APDD) of $\S$, denoted by $D_\S$, is calculated as:
\begin{equation}
D_\S=\frac{1}{\sum_{n=1}^Nw_n}\sum_{n=1}^N\sum_{k=1}^Ku_{n,k}.
\end{equation}

Our aim in this paper is to study the smallest $D_\S$ over all possible linear NC solutions. We call it the minimum APDD of $\mA$ and denote it by $\Dmin$. The first question we would like to answer is: \emph{Is it hard to find $\Dmin$?}

\begin{figure}
\centering
\subfigure[Hypergraph $\mH$]{\includegraphics[width=0.3\linewidth]{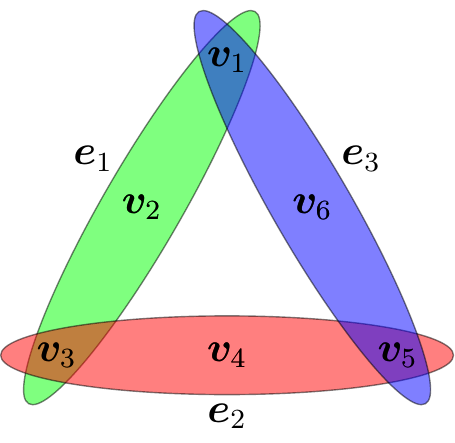}}\hspace{20pt}
\subfigure[State feedback matrix $\mA$]{\includegraphics[width=0.55\linewidth]{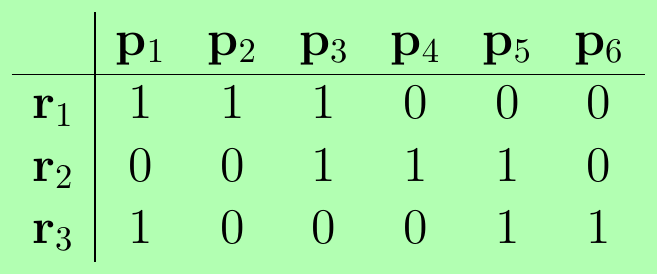}}
\caption{A hypergraph and its equivalent SFM $\mA$}
\label{fig:sfm_graph}
\end{figure}

\vspace{-0.3em}
\section{The Hardness of Finding $\Dmin$}
In this section, we study the hardness of finding $\Dmin$. To this end, we will first introduce the concept of \emph{perfect NC solution}, whose APDD is a lower bound of $\Dmin$. Then, we will prove that deciding whether a perfect solution exists for a given instance $\mA$ of the problem at hand is an NP-hard problem. This implies, in turn, that the problem of finding $\Dmin$ is also NP-hard.

\subsection{The Perfect Solution and a Bound of $\Dmin$}
\begin{Definition}
A NC solution $\S$ is called a perfect solution and is denoted by $\S_p$ if it allows every receiver $n$ to decode a wanted data packet in every transmission of $\S$.
\end{Definition}

Obviously, $\S_p$ offers the ideal packet decoding scenario. Its APDD is thus a lower bound of $\Dmin$, denoted by $\underline D$, which is calculated as:
\begin{align}
\Du&=\frac{1}{\sum_{n=1}^Nw_n}\sum_{n=1}^N\sum_{u=1}^{w_n} u\\
             &=\frac{1}{\sum_{n=1}^Nw_n}\sum_{n=1}^N\frac{(w_n+1)w_n}{2}\\
             &=\frac{\sum_{n=1}^Nw_n^2}{2\sum_{n=1}^Nw_n}+\frac{1}{2}\label{eq:du}
\end{align}
It is clear that $\Du$ can only be achieved by $\S_p$ if a perfect solution exists. The natural question =in this context is: \emph{Does a perfect solution $\S_p$ exist for every SFM?} In the next subsection, by using a reduction from the strong hypergraph coloring problem, we will prove that this question is NP-hard to answer.

\subsection{Hardness of Finding $\S_p$}
We first introduce some useful concepts in hypergraphs. A hypergraph $\mH$ is defined by a pair $(\V,\E)$, where $\V$ is the set of vertices, and $\E$ is the set of hyperedges. Every hyperedge $\e\in\E$ is a subset of $\V$ with size $|\e|\geqslant 1$. A hypergraph is $r-$uniform if every hyperedge $\e$ has equal size, i.e., $|\e|=r$. A $k$-\emph{strong} coloring solution of $\mH$ is a partition of $\V$ into $k$ subsets $\{\V_i\}_{i=1}^k$, such that $|\V_i\cap\e|\leqslant 1$ for any $\e\in\E$. In other words, every color appears at most once in every hyperedge.
It is well known that the hypergraph coloring problem is intractable.

\begin{Lemma}[\hspace{-0.5pt}\cite{agnarsson2005strong}]\label{lemma:strong_color}
It is NP-hard to determine whether an $r-$uniform hypergraph is $r$-strong colorable, for any $r\geqslant3$.
\end{Lemma}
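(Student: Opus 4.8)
The plan is to prove NP-hardness by a polynomial-time reduction from ordinary graph $r$-coloring, which is NP-hard for every fixed $r\geqslant 3$. The guiding observation is that for an $r$-uniform hypergraph $\mH$, an $r$-strong coloring is precisely an $r$-coloring of the vertices in which every hyperedge is \emph{rainbow}: because each hyperedge $\e$ satisfies $|\e|=r$ while the constraint $|\V_i\cap\e|\leqslant 1$ forbids a repeated color inside $\e$, all $r$ vertices of $\e$ must carry the $r$ distinct colors. Thus deciding $r$-strong colorability is exactly deciding whether all hyperedges can simultaneously be made rainbow with $r$ colors.

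Given an instance $G=(V(G),E(G))$ of graph $r$-coloring, I would build an $r$-uniform hypergraph $\mH=(\V,\E)$ whose vertex set is $\V=V(G)\cup\{w_{e,1},\dots,w_{e,r-2}:e\in E(G)\}$, where for each edge $e$ the $r-2$ \emph{padding} vertices $w_{e,1},\dots,w_{e,r-2}$ occur in no other edge's gadget. For each edge $e=\{u,v\}\in E(G)$ I add exactly one hyperedge $\e=\{u,v,w_{e,1},\dots,w_{e,r-2}\}$, which contains $r$ vertices, so $\mH$ is $r$-uniform; the construction has $|V(G)|+(r-2)|E(G)|$ vertices and $|E(G)|$ hyperedges and is clearly polynomial for fixed $r$.

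Next I would establish the equivalence ``$\mH$ is $r$-strong colorable $\iff$ $G$ is $r$-colorable.'' In one direction, any proper $r$-coloring $c$ of $G$ extends to a rainbow coloring of $\mH$: the endpoints $u,v$ of each edge already receive two distinct colors, and the $r-2$ padding vertices of that hyperedge---being private to it---can be assigned the remaining $r-2$ colors bijectively, turning the hyperedge rainbow without disturbing any other hyperedge. In the other direction, restricting any $r$-strong coloring of $\mH$ to $V(G)$ yields a proper coloring, since for every edge $e=\{u,v\}$ its hyperedge is rainbow and hence $c(u)\neq c(v)$. The two instances therefore have identical answers, and NP-hardness of graph $r$-coloring carries over.

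The point that requires care, and the one I expect to be the crux, is showing that the padding vertices never couple distinct edge constraints, so that the forward extension always succeeds: keeping each padding vertex inside a single hyperedge guarantees that its only requirement is to differ from the other $r-1$ vertices there, and since the two graph-endpoints consume only two colors, exactly $r-2$ colors remain free for the $r-2$ padding vertices. A secondary ingredient is the base hardness for every $r\geqslant 3$: $3$-coloring is NP-hard, and $r$-coloring inherits NP-hardness by adjoining a clique of $r-3$ universal vertices that force $r-3$ dedicated colors, leaving three colors for the original graph.
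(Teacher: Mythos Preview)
Your argument is correct. The reduction from graph $r$-coloring via private padding vertices is sound: each hyperedge is $r$-uniform, the padding vertices of distinct edges never interact, so the forward extension always goes through, and the backward restriction is immediate because a rainbow hyperedge forces its two graph-endpoints to differ. The auxiliary claim that graph $r$-coloring is NP-hard for every fixed $r\geqslant 3$ via adjoining $r-3$ universal vertices is also standard and correct.

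What differs from the paper is simply that the paper does \emph{not} prove this lemma at all: it is stated with a citation to Agnarsson and Halld\'{o}rsson and then used as a black box in the subsequent reduction to the perfect-solution problem. So there is no ``paper's own proof'' to compare against; your write-up supplies a self-contained justification where the paper defers to the literature. In that sense your contribution is complementary rather than alternative: it makes the hardness chain explicit, at the cost of a short gadget argument the authors chose to omit.
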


We then build a reduction from the strong hypergraph coloring solution for $r$-uniform hypergraphs to the problem of finding a perfect NC solution for the average delay minimization problem. Given an $r$-uniform hypergraph $\mH(\V,\E)$ we construct an instance to our problem as follows. First, for each vertex $\v_k$ we introduce a data packet $\p_k$, and for each hyperedge $\e_n$ we introduce a receiver $\r_n$ who wants the data packets that correspond to vertices in $\e_n$. 
Note that in the resulting SFM $\mA$, every receiver wants $r$ data packets. A 3-uniform hypergraph and the corresponding SFM matrix are depicted in \figref{fig:sfm_graph}.


First, we prove that an existence of an $r$-strong coloring solution $\{\V_i\}_{i=1}^r$ of $\mH$ implies a perfect solution for our problem. Let $\{\V_i\}_{i=1}^r$ be an $r$-strong coloring of $\mH$. For each $\V_i$, let $\M_i$  be a set of packets that correspond to vertices in $\V_i$. Note that for each each receiver $\r_n$ and each set $\M_i$ it holds that $|\M_i\cap\w_n|=1$. Consider a coding solution $\S$ that includes $r$ transmissions, such that transmission $i$ includes a sum of packets in $\M_i$ (over $\F_q$). Since every receiver can decode a packet at each transmission, $\S$ is a perfect solution to our our problem.


Next, we show that a perfect solution $\S_p$ for the instance of $\mA$ of our problem implies that there exists an  $r$-strong coloring solution $\{\V_i\}_{i=1}^r$ of $\mH$. Let $\M_i$ be the coding set that corresponds to the transmission $i$ of $\S_p$ and let $V_i$ be the set of vertices in $\mH$ that correspond to $\M_i$. Note that in order to allow every receiver to decode one data packet in each of the $r$ transmissions, every $\M_i$ must contain one wanted data packet of every receiver, i.e., $|\M_i\cap\w_n|=1$. Thus, $\{\V_i\}_{i=1}^r$ is an $r$-strong coloring solution $\{\V_i\}_{i=1}^r$ of $\mH$.


We conclude that an $r$-uniform hypergraph is $r$-strong colorable if and only if there exists a perfect NC solution of the instance $\mA$ of our problem. We summarize our results in the following lemma:

\begin{Lemma}\label{cor:det_perfect}
It is NP-hard to determine whether there exists a perfect solution for a given instance $\mA$ of minimum APDD problem.
\end{Lemma}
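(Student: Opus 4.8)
The plan is to prove Lemma~\ref{cor:det_perfect} via the polynomial-time reduction already set up above: from an $r$-uniform hypergraph $\mH(\V,\E)$ build the SFM $\mA$ with one packet $\p_k$ per vertex $\v_k$ and one receiver $\r_n$ per hyperedge $\e_n$ whose wanted set $\w_n$ is indexed by $\e_n$, so that $w_n=r$ for every $n$. This map is computable in time polynomial in $|\V|+|\E|$, so by Lemma~\ref{lemma:strong_color} it suffices to show, for $r=3$ say, that $\mH$ is $r$-strong colorable if and only if $\mA$ admits a perfect solution.

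For the forward direction, given an $r$-strong coloring $\{\V_i\}_{i=1}^r$, I would first note that since the $\V_i$ partition $\V$, $|\e_n|=r$, and $|\V_i\cap\e_n|\le 1$, pigeonhole forces $|\V_i\cap\e_n|=1$ for every $i$ and $n$. Transmitting $X_i=\sum_{\p_k\in\M_i}\p_k$, where $\M_i$ is the packet set of $\V_i$, each receiver finds exactly one wanted packet in $\M_i$ and already holds the remaining packets of $\M_i$, hence decodes one fresh wanted packet in each of the $r$ transmissions; this is a perfect solution, and unit coefficients suffice so the field is irrelevant.

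The reverse direction is where I expect the real effort, since the claim that every coding set must satisfy $|\M_i\cap\w_n|=1$ is not literally forced: a receiver can decode a wanted packet through a linear combination of several received packets even when the current coding set meets $\w_n$ in more than one coordinate. The plan here is first to show that a perfect solution has exactly $r$ transmissions --- at most $r$ because each of a receiver's $r$ wanted packets is decoded once while at least one is newly decoded per transmission, and at least $r$ because $r$ linearly independent equations are needed to recover $r$ wanted packets. Next, fixing a receiver $n$ and projecting every coding coefficient vector onto the coordinates indexed by $\w_n$, let $W^{(n)}_i$ be the span of the first $i$ projections; the ``decode something new at every step'' requirement forces $\dim W^{(n)}_i=i$ and forces $W^{(n)}_i$ to be the coordinate subspace spanned by the $i$ unit vectors of the packets decoded by step $i$, so exactly one fresh wanted packet is decoded at each step. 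From this structure one reads off that the packet receiver $n$ decodes at step $i$ belongs to $\M_i$, and that every other element of $\M_i\cap\w_n$ was decoded strictly before step $i$. A short induction on $i$ then shows that any two receivers sharing a wanted packet decode it at the same step, so the map sending each packet to the step at which it is decoded is well defined and assigns pairwise-distinct values within each $\e_n$ --- that is, an $r$-strong coloring of $\mH$.

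Combining the two implications with Lemma~\ref{lemma:strong_color} shows that deciding whether a given $\mA$ admits a perfect solution is NP-hard, which is the statement of Lemma~\ref{cor:det_perfect}; since $\Du$ is attainable precisely when a perfect solution exists, this is also what makes $\Dmin$ hard to compute. The main obstacle is exactly the reverse direction just sketched; one could alternatively try to normalize an arbitrary perfect solution into one with $|\M_i\cap\w_n|=1$ for all $i,n$ and then invoke the short argument above, but I expect the coordinate-subspace induction to be the most direct route to a rigorous proof.
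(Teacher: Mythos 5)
Your proposal uses exactly the paper's reduction (packets for vertices, receivers for hyperedges, $w_n=r$) and the same forward direction, so the overall architecture coincides; the genuine difference is in the reverse direction, and there your instinct is correct and valuable. The paper simply asserts that a perfect solution must satisfy $|\M_i\cap\w_n|=1$ for every $i$ and $n$ and then takes $\V_i$ to be the vertex set of $\M_i$; as you note, this is not literally forced. For example, sending $\p_1$ and then $\p_1+\p_2$ lets a receiver with $\w_n=\{\p_1,\p_2\}$ decode one new packet in each transmission even though $|\M_2\cap\w_n|=2$, and the corresponding sets $\V_i$ would not even partition $\V$. Your repair is sound: a perfect solution has exactly $r$ transmissions (at most $r$ since only $r$ packets can be newly decoded, at least $r$ since each transmission adds at most one degree of freedom); projecting the coding vectors onto the coordinates of $\w_n$ forces $W^{(n)}_i$ to be the coordinate subspace of the $i$ packets decoded so far, hence the packet decoded at step $i$ lies in $\M_i$ while every other element of $\M_i\cap\w_n$ was decoded strictly earlier; and the induction showing that a shared packet is decoded at the same step by all receivers wanting it (if $\p_k\in\M_i\cap\w_m$ and $\r_m$ has not decoded $\p_k$ before step $i$, your structural claim forces $\p_k$ to be precisely the packet $\r_m$ decodes at step $i$) makes the step-of-decoding map a well-defined $r$-strong coloring. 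The only loose end is cosmetic: vertices contained in no hyperedge are decoded by nobody and should be assigned an arbitrary color. In short, your proof is correct, takes the same reduction as the paper, and is more rigorous than the paper's own argument on the one step that actually requires care.
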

\begin{proof}
The theorem follows from our construction and  Lemma~\ref{lemma:strong_color}.
\end{proof}

\subsection{The Hardness of Finding $\Dmin$}

Since $\underline{D}$ can only be achieved by a perfect solution $\S_p$, an optimal algorithm that finds $\Dmin$ will be able to determine the existence of a perfect solution by comparing $\Dmin$ with $\underline{D}$. According to Lemma~\ref{cor:det_perfect}, this decision is NP-hard to made, and thus it is NP-hard to find $\Dmin$:

\begin{Theorem}
It is NP-hard to find $\Dmin$ for a given instance $\mA$ of minimum APDD problem .
\end{Theorem}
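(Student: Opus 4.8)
The plan is to establish NP-hardness of computing $\Dmin$ by a Turing reduction from the decision problem ``does the instance $\mA$ admit a perfect solution?'', which Lemma~\ref{cor:det_perfect} already shows is NP-hard (and, crucially, is shown NP-hard already on the $r$-uniform instances produced by the hypergraph construction preceding that lemma). The reduction is the obvious one: given $\mA$, compute the quantity $\Du$ of \eqref{eq:du} in time polynomial in the size of $\mA$, invoke a hypothetical oracle that returns $\Dmin$, and answer ``yes'' precisely when $\Dmin=\Du$. Everything hinges on the equivalence $\Dmin=\Du \iff \exists\,\S_p$ on these instances.

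The substantive step is to prove that equivalence, i.e.\ that $\Du$ is a lower bound on $D_\S$ for every NC solution $\S$ and is attained only by perfect solutions. For the lower bound, I would fix a receiver $\r_n$ and sort its decoding instants $u_{n,k}$, $k\in\w_n$, as $u_{n,(1)}\le\cdots\le u_{n,(w_n)}$; since a receiver gains at most one degree of freedom per received packet, it has decoded at most $t$ of its wanted packets after $t$ transmissions, so $u_{n,(j)}\ge j$ for every $j$, hence $\sum_{k\in\w_n}u_{n,k}\ge\sum_{j=1}^{w_n}j=\tfrac{w_n(w_n+1)}{2}$. Summing over $n$ and dividing by $\sum_n w_n$ recovers $D_\S\ge\Du$, with equality only if $u_{n,(j)}=j$ for all $n,j$, i.e.\ every receiver decodes exactly one new wanted packet in each of transmissions $1,\dots,w_n$. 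On an $r$-uniform instance ($w_n=r$ for all $n$) this forces any solution meeting the bound to have a set of $r$ transmissions at each of which every receiver decodes, with all later transmissions redundant and deletable without changing $D_\S$; the trimmed solution is a perfect solution. Conversely, a perfect solution on an $r$-uniform instance must use exactly $r$ transmissions and makes each $u_{n,(j)}=j$, hence attains $\Du$. Thus $\Dmin=\Du$ iff a perfect solution exists.

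With the equivalence in hand the theorem is immediate: a polynomial-time algorithm computing $\Dmin$ would, together with the polynomial-time evaluation of $\Du$ and one comparison, decide the existence of a perfect solution in polynomial time, contradicting Lemma~\ref{cor:det_perfect} unless $\mathrm{P}=\mathrm{NP}$. Hence it is NP-hard to find $\Dmin$.

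I expect the main obstacle to be the ``equality implies perfectness'' direction: one must rule out a receiver batching several decodings into a single transmission to offset an idle earlier transmission, and one must reconcile the Definition of a perfect solution (decode a wanted packet in \emph{every} transmission of $\S$) with solutions that carry extra, useless trailing transmissions. Restricting to the $r$-uniform instances of the hypergraph reduction, where all $w_n$ coincide and an optimal solution uses exactly $r$ transmissions, is what makes this clean; the degree-of-freedom count together with the rearrangement inequality then does the rest, and the fact that each receiver's contribution to $D_\S$ is bounded below individually ensures that equality in the average propagates to equality for every receiver.
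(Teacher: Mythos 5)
Your proof takes essentially the same route as the paper: query the hypothetical $\Dmin$-oracle, compare its answer with the polynomial-time computable lower bound $\Du$, and thereby decide the existence of a perfect solution, which Lemma~\ref{cor:det_perfect} shows is NP-hard. The paper merely asserts that $\Du$ is attained only by perfect solutions; your degree-of-freedom counting argument and the restriction to the $r$-uniform instances of the reduction supply the justification the paper leaves implicit, so your write-up is correct and somewhat more complete than the original.
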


In addition to NP-hardness, our reduction from the  hypergraph coloring also yields an interesting conjecture on the existence of perfect solution for some special instances of $\mA$. It comes from the famous Erd\H{o}s-Faber-Lov\'asz conjecture in graph theory \cite{erdos1981combinatorial}:
\begin{Conjecture}[Erd\H{o}s-Faber-Lov\'asz \cite{erdos1981combinatorial}]
Consider an \mbox{$r$-uniform hypergraph} with $r$ hyperedges. Each pair of hyperedges have at most one vertex in common. This hypergraph is $r$-strong colorable.
\end{Conjecture}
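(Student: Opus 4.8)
The plan is to translate the conjecture into an equivalent graph-colouring statement and then attack it with the probabilistic ``absorption'' paradigm, since no purely elementary bound is tight enough. First I would form the graph $G$ on the vertex set $\V$ of $\mH$, joining two vertices whenever they lie in a common hyperedge. Because $\mH$ is linear (any two hyperedges meet in at most one vertex) and $r$-uniform, each hyperedge $\e$ becomes a clique $K_r$ in $G$, and these $r$ cliques pairwise share at most one vertex. An $r$-strong colouring of $\mH$ is then exactly a proper $r$-colouring of $G$ (and, by the reduction of the previous subsection, a perfect NC solution for the induced instance $\mA$), so the conjecture is the assertion $\chi(G)\le r$. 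Necessity of $r$ colours is immediate, since a single hyperedge already induces a $K_r$; the whole difficulty lies in sufficiency.

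Second, I would extract the two quantitative constraints that the linear structure forces. Writing $d_v$ for the number of hyperedges through a vertex $v$, double counting gives
\begin{equation}
\sum_{v\in\V} d_v = r^2,\qquad \sum_{v\in\V}\binom{d_v}{2}\le\binom{r}{2},
\end{equation}
the second because each of the $\binom{r}{2}$ pairs of hyperedges contributes at most one shared vertex. Together these yield $\sum_{v} d_v^2\le 2r^2-r$, so the overwhelming majority of vertices lie in only one or two hyperedges and have $\deg_G$ about $r-1$, while the ``high-degree'' vertices (sunflower-like centres, where $\deg_G$ can reach $r(r-1)$) are rare. This is the essential leverage: the obstruction to any greedy or Brooks-type argument is precisely those few high-degree vertices, and the counting shows they cannot be numerous.

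Third, I would split into regimes. For small $r$ the statement can be settled by the known partial results and direct case analysis, so I would treat $r$ large. Here I would reserve a small random reservoir of vertices (equivalently, a handful of colours) to build an \emph{absorbing structure}, colour the bulk of $G$ by a semi-random R\"odl-nibble procedure that exploits the sparsity bound above to keep the set of still-uncoloured vertices small and well spread, and finally use the absorber to recolour the leftover without ever introducing an $(r+1)$-st colour. Along the way I would colour the rare high-degree centres first, on a dedicated part of the palette reserved for them, so that the nibble only ever sees a near-$(r-1)$-regular, locally sparse graph on which standard concentration arguments apply.

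The hard part, and the reason the conjecture resisted for decades, is the final step: achieving \emph{exactly} $r$ colours with no slack. Elementary and even strong probabilistic methods (for instance Kahn's near-optimal $r+o(r)$ bound) always lose a lower-order term, and closing that gap requires an absorber engineered so precisely that repairing the last uncoloured vertices never overflows the palette while still respecting the uniform, linear incidence constraints of $\mH$. I expect the main obstacle to be the joint control of the high-degree centres and the absorption analysis, and I would anticipate that the argument only goes through for $r$ beyond an explicit threshold, leaving the finitely many smaller cases to be verified separately.
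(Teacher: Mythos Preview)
The paper does not prove this statement: it is labelled and presented as a \emph{conjecture} (the Erd\H{o}s--Faber--Lov\'asz conjecture), quoted from \cite{erdos1981combinatorial} purely to motivate the companion Conjecture~2 about perfect NC solutions. There is therefore no proof in the paper for your attempt to be compared against.

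As for the content of your sketch: the outline you give---reformulating as $\chi(G)\le r$ for the line-clique graph, isolating the high-degree ``sunflower centres'' via the double-count $\sum_v\binom{d_v}{2}\le\binom{r}{2}$, reserving an absorber, running a semi-random nibble on the near-regular remainder, and then absorbing the leftover---is indeed the architecture of the 2021 resolution by Kang, Kelly, K\"uhn, Methuku, and Osthus, and you are candid that it only applies for $r$ above some large threshold. But a plan at this level of detail is not a proof: the entire difficulty of that work lies in constructing an absorber that never overflows the palette by even one colour, and your proposal simply asserts that such an absorber exists without indicating how to build it or why the nibble residue is structured enough to be absorbed. Moreover, for the remaining small values of $r$ the conjecture is still open, so ``leaving the finitely many smaller cases to be verified separately'' is not a step that can actually be carried out at present. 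In short, you have correctly identified where the hard part is, but you have not supplied it, and the paper never claimed to either.
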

The corresponding conjecture in NC context is as follows:
\begin{Conjecture}
Consider an instance $\mA$ of our problem with $r$ receivers, each wants $r$ data packets, and each pair of receivers want at most one data packet in common. This $\mA$ has a perfect NC solution $\S_p$.
\end{Conjecture}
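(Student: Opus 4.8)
\emph{Proof proposal.} The plan is to recognise that, through the reduction constructed above, this statement is essentially a verbatim translation of the Erd\H{o}s-Faber-Lov\'asz (EFL) conjecture, and then to import into the coding setting whatever is provable about EFL. I would first upgrade the correspondence used above into an exact dictionary, restricted to instances of the stated form: receivers $\leftrightarrow$ hyperedges and wanted packets $\leftrightarrow$ vertices, so that ``$r$ receivers, each wanting $r$ packets'' becomes ``an $r$-uniform hypergraph with $r$ hyperedges'' and ``each pair of receivers shares at most one wanted packet'' becomes ``each pair of hyperedges shares at most one vertex''. The direction I actually need is the easy one: given an $r$-strong colouring $\{\V_i\}_{i=1}^r$, let $\M_i$ be the packet set matching $\V_i$ and transmit $X_i=\sum_{\p_k\in\M_i}\p_k$ over $\F_q$; since each $\w_n$ has $r$ packets, is hit at most once by each class, and there are $r$ classes, we get $|\M_i\cap\w_n|=1$ for every receiver $n$ and every $i$, so every receiver peels off one fresh wanted packet in each of the $r$ transmissions and $\S=\{X_i\}_{i=1}^r$ is a perfect solution. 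Hence EFL implies the statement, and running the same dictionary backwards shows the two are in fact equivalent.

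Given that equivalence, the realistic content of a proof is to collect the partial results known for EFL. I would first invoke the settled cases --- EFL has been verified for small $r$ and for several structured hypergraph families (e.g., near-pencils) --- to conclude that every broadcast instance $\mA$ of the stated form with few receivers, or with the matching combinatorial structure, admits a perfect solution $\S_p$ attaining $\Du$. I would then record the unconditional weakenings that follow from classical colouring bounds: a strong colouring with at most about $\tfrac{3}{2}r$ colours always exists, and the semi-random (nibble) method brings this down to $(1+o(1))r$ colours; translating the latter into a transmission schedule pins $\Dmin$ for this instance family to within a $1+o(1)$ factor of $\Du$ while keeping the throughput within the same factor of optimal. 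A self-contained attack would instead bound the fractional strong chromatic number of the system $\{\M_i\}$ by $r$ through an LP-duality / de Bruijn-Erd\H{o}s-type argument and then round, the rounding being exactly where the loss appears.

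The step I expect to be the genuine obstacle is the one that has kept EFL open: forcing the number of transmissions down to \emph{exactly} $r$ rather than $r+o(r)$. In the coding picture this gap is not cosmetic, because a perfect solution must use precisely $w_n=r$ transmissions --- a single extra transmission violates ``decode in every transmission'', lifts the APDD above $\Du$, and also costs throughput --- while greedy, probabilistic, and LP-rounding arguments all leak a lower-order term at precisely this point. I therefore would not expect to prove the statement in full; the honest deliverable is the conditional ``this holds iff EFL holds'', together with the unconditional special cases above, which is consistent with its being posed here as a conjecture rather than a theorem.
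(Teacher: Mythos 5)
Your proposal is correct and takes essentially the same route as the paper: the statement is presented there as a \emph{conjecture} precisely because it is the verbatim translation of Erd\H{o}s-Faber-Lov\'asz under the packet$\leftrightarrow$vertex, receiver$\leftrightarrow$hyperedge dictionary already set up in the hardness reduction, and your pigeonhole argument showing that an $r$-strong colouring yields $|\M_i\cap\w_n|=1$ and hence a perfect solution is exactly the correspondence the paper relies on. Your honest conclusion --- that only the conditional equivalence and the known partial cases of EFL are deliverable, not an unconditional proof --- matches the paper's own treatment.
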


We showed that the problem of finding a minimum value of $\Dmin$ is intractable. Accordingly, in the next sections we discuss approximation algorithms for this problem.


\section{Approximating $\Dmin$ }\label{sec:approx}
In this section, we aim at approximating $\Dmin$. An approximation algorithm of $\Dmin$ produces a linear NC solution $\S$ with its APDD obeying $D_\S\leqslant \a\Dmin$. We refer to $\a\geqslant 1$ as \emph{approximation ratio} of the algorithm.


In the next theorem we analyze the approximation ratio of the RLNC technique:

\begin{Theorem}
RLNC technique is at most a $2-$approximation algorithm of $\Dmin$.
\end{Theorem}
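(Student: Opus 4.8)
The plan is to pin down the APDD of RLNC exactly and then compare it with the lower bound $\Du$ of \eqref{eq:du}, which the excerpt already establishes satisfies $\Dmin\geqslant\Du$. The crucial structural fact is that RLNC ties together all of a receiver's wanted packets: since every transmitted packet is a random combination of \emph{all} of $\P$, receiver $\r_n$ decodes either none or all of $\w_n$, and it decodes all of $\w_n$ precisely at the transmission where it collects its $w_n$-th innovative packet. Once this is established, the theorem reduces to a one-line inequality.

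First I would argue about decoding times. RLNC transmits random-coded packets with coding set $\M=\P$; receiver $\r_n$ enters the phase holding the $K-w_n$ packets it already has. By the second modeling assumption (a sufficiently large $\F_q$), with high probability the random coefficient vectors are in generic position, so each of $\r_n$'s first $w_n$ received packets is innovative (raises its DoF by one), and after $w_n$ transmissions $\r_n$ has $K$ linearly independent equations and recovers all of $\w_n$. Moreover, $\r_n$ decodes \emph{no} wanted packet before transmission $w_n$: after $j<w_n$ receptions the linear system restricted to $\r_n$'s unknowns has, w.h.p., a solution space of dimension $w_n-j\geqslant 1$, so no individual wanted packet is determined. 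Hence $u_{n,k}=w_n$ for every $(n,k)$ with $\mA(n,k)=1$. (As a side remark, the receiver with the largest $w_n$ forces at least $\max_n w_n$ transmissions, and RLNC uses exactly that many, so it is also throughput-optimal; but this is not needed for the ratio bound.)

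Substituting $u_{n,k}=w_n$ into the definition of $D_\S$ gives
\begin{equation}
\Dr=\frac{1}{\sum_{n=1}^N w_n}\sum_{n=1}^N w_n\cdot w_n=\frac{\sum_{n=1}^N w_n^2}{\sum_{n=1}^N w_n}.
\end{equation}
Comparing with \eqref{eq:du} yields $\Du=\tfrac12\Dr+\tfrac12$, i.e.\ $\Dr=2\Du-1$. Since $\Dmin\geqslant\Du$, we conclude $\Dr=2\Du-1<2\Du\leqslant 2\Dmin$, so RLNC is at most a $2$-approximation of $\Dmin$. The ratio $\Dr/\Dmin$ tends to $2$ as $\sum_n w_n^2/\sum_n w_n$ grows, so the factor $2$ is essentially tight.

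The only genuine obstacle is the high-probability step: one must guarantee that a single random choice of coefficients is simultaneously ``good'' for all $N$ receivers, so that each $\r_n$ gains exactly one DoF per transmission through its first $w_n$ receptions and resolves no packet earlier. This is the standard generic-position argument — the relevant minors are nonzero polynomials in the coefficients, hence nonvanishing on a $(1-o(1))$ fraction of $\F_q^{\,\cdot}$ for $q$ large (a Schwartz–Zippel estimate, union-bounded over the finitely many receivers and sub-collections) — and it is precisely what the second modeling assumption supplies. Everything else is the elementary arithmetic above.
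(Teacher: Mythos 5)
Your proposal is correct and follows essentially the same route as the paper: compute $\Dr=\sum_n w_n^2/\sum_n w_n$ from the block-decoding behavior of RLNC and compare it to the perfect-solution lower bound $\Du$ of \eqref{eq:du} via $\Dr=2\Du-1<2\Du\leqslant 2\Dmin$. The extra care you take to show that no packet is decoded \emph{before} transmission $w_n$ is fine but not needed for the upper bound on the ratio; only the fact that all of $\w_n$ is decoded by transmission $w_n$ (w.h.p.) matters.
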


\begin{proof}
In every RLNC transmission, the sender sends a random-coded packet of all data packets. With high probability (that asymptotically goes to 1 with the field size), after receiving $w_n$ such packets, receiver $\r_n$ can decode all its wanted data packets by performing block decoding, i.e., solving a set of $w_n$ linear equations. Hence, the APDD offered by RLNC is:
\begin{align}
\Dr=\frac{1}{\sum_{n=1}^Nw_n}\sum_{n=1}^Nw_n^2.
\end{align}
Comparing $\Dr$ with the lower bound $\underline D$ in \eqref{eq:du}, we have:
\begin{equation}
\frac{\Dr}{\Du}=\frac{\frac{\sum_{n=1}^Nw_n^2}{\sum_{n=1}^Nw_n}}{\frac{\sum_{n=1}^Nw_n^2}{2\sum_{n=1}^Nw_n}+\frac{1}{2}}< 2.
\end{equation}
Since $\Dmin\geqslant \Du$, $\Dr$ at most doubles $\Dmin$. Thus, RLNC is at most a 2-approximation algorithm of $\Dmin$.
\end{proof}

Therefore, RLNC technique offers guaranteed APDD performance. On the other hand, to the best of our knowledge existing opportunistic APDD-reduction techniques are not able to provide provable performance guarantees.  For example, let us analyze a well-known APDD-reduction technique called instantly decodable network coding (IDNC).

IDNC has two variations, strict IDNC (S-IDNC) \cite{Rozner_Heuristic_clique,sundararajan:sadeghi:medard:2009,yu:parastoo:neda:2014} and general IDNC (G-IDNC) \cite{sorour:valaee:2010}. Both of them have been shown to provide lower APDD than RLNC with a small number of receivers, but become worse than RLNC with increasing number of receivers. Due to the absence of the optimal G-IDNC algorithm \cite{sorour:valaee:2010}, we are not able to prove whether G-IDNC approximates $\Dmin$ or not. However, we are able to prove the following statement for S-IDNC:
\begin{Lemma}\label{theo:idnc}
S-IDNC does not provide a constant approximation ratio for the minimum APDD problem.
\end{Lemma}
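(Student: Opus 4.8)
The plan is to exhibit an explicit family of instances $\mA$ (parameterized by a growing quantity, say the number of packets $K$) for which the lower bound $\Du$ stays bounded — in fact $\Du$ remains close to $1$ — while every valid S-IDNC solution needs a number of transmissions that grows with $K$, forcing $D_{\S}$ to grow without bound. Recall that in S-IDNC each transmitted packet $X$ must be such that its coding set $\M$ contains \emph{at most one} wanted packet of every receiver (so that each receiver can instantly decode), and moreover no packet already held by a receiver may appear in $\M$ if that receiver is to benefit — i.e. S-IDNC coding sets correspond exactly to the ``conflict-free'' sets used in the hypergraph reduction above. The key tension I will exploit: S-IDNC transmissions are very restricted, so a receiver that wants many packets that are pairwise ``in conflict'' at other receivers cannot have those packets delivered quickly.

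First I would pick the instance. A clean choice: let $K$ be the number of packets, take one ``heavy'' receiver $\r_1$ that wants all $K$ packets ($w_1 = K$), and then add auxiliary receivers designed so that the only legal S-IDNC coding sets are singletons. For instance, for every pair $\{p_i,p_j\}$ introduce a receiver that wants exactly $\{p_i,p_j\}$; such a receiver forbids any coding set containing both $p_i$ and $p_j$, so in S-IDNC every transmission can contain only a single packet. Then S-IDNC degenerates to uncoded transmission: it needs $K$ transmissions, and $\r_1$ decodes its packets at times $1,2,\dots,K$, contributing average delay $(K+1)/2$ just from $\r_1$. Meanwhile I would check that $\Dmin$ (and even $\Du$) stays $O(1)$: the pairwise receivers each want only $2$ packets, and one can design the instance (tuning how many auxiliary receivers of each type are present, or using a weighting argument) so that $\sum_n w_n^2 / \sum_n w_n$ is bounded by a constant. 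The cleanest version may instead take $\Theta(K^2)$ pairwise receivers so that the heavy receiver's contribution to $\Du$ is negligible, giving $\Du \to 1$, while S-IDNC still suffers delay $\Theta(K)$ on $\r_1$ (and this already makes $D_{\S_{\mathrm{S\text{-}IDNC}}} \geqslant \Omega(K) \cdot (\text{fraction of demand from } \r_1)$ — so I must make sure $\r_1$'s demand is not washed out; an alternative is to replicate $\r_1$ many times). I would then compute the ratio $D_{\S_{\mathrm{S\text{-}IDNC}}}/\Dmin$ and show it grows like $\Theta(K)$, hence unbounded, which is exactly the claim.

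The main obstacle is balancing two competing requirements in the construction: I need enough auxiliary ``conflict'' receivers to force S-IDNC into near-uncoded behavior (driving its APDD up), but those same receivers contribute to the denominators $\sum_n w_n$ and to $\Dmin$ via $\Du$, and if there are too many of them with small $w_n$ they pull $\Du$ down toward $1$ but also dilute the heavy receiver's large decoding delays in the S-IDNC average $D_{\S}$. The resolution is a careful counting argument: I will choose the multiplicities so that the total demand $\sum_n w_n$ is dominated by (many copies of) the heavy receiver(s), ensuring S-IDNC's forced $\Theta(K)$-length schedule inflicts $\Theta(K)$ average delay on an $\Omega(1)$ fraction of all wanted packets, while simultaneously $\Du = \Theta(1)$; then since $\Dmin \geqslant \Du$ is not what bounds the ratio from above — rather I should upper-bound $\Dmin$ directly by exhibiting an RLNC-type (or mixed) solution of bounded APDD on this instance, or simply invoke $\Dmin \leqslant \Dr$ and bound $\Dr = \sum w_n^2 / \sum w_n$ to be $O(1)$ by the same multiplicity choice. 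Putting $D_{\S_{\mathrm{S\text{-}IDNC}}} = \Omega(K)$ against $\Dmin = O(1)$ finishes the proof that no constant approximation ratio is possible.
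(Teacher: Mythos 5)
Your counterexample family is essentially the paper's: the paper uses exactly the set of $K(K-1)/2$ pairwise receivers induced by a complete graph on $K$ packet--vertices, which forces every S-IDNC coding set to be a singleton, and compares the resulting uncoded schedule against $\Dmin\leqslant\Dr=2$. However, your delay accounting contains a genuine gap. You attribute the $\Theta(K)$ average delay entirely to the heavy receiver $\r_1$, and you correctly observe that with $\Theta(K^2)$ auxiliary receivers its demand is only a $1/K$ fraction of $\sum_n w_n$, so it contributes $O(1)$ to the S-IDNC APDD. But both of your proposed repairs fail: keeping $\r_1$ undiluted is impossible once $\Theta(K^2)$ gadget receivers are present, and replicating $\r_1$ until heavy receivers dominate $\sum_n w_n$ is self-defeating, because a receiver with $w_n=K$ has average decoding delay at least $(K+1)/2$ under \emph{any} linear scheme, so by \eqref{eq:du} the lower bound $\Du$ (and hence $\Dmin$) would itself grow as $\Theta(K)$ and the separation collapses. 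The requirement you state --- heavy receivers dominating the total demand ``while simultaneously $\Du=\Theta(1)$'' --- is internally inconsistent.

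The missing observation, which is how the paper closes the argument, is that the pairwise gadget receivers are themselves the victims: under the forced singleton schedule, the receiver wanting $\{\p_i,\p_j\}$ decodes at times $i$ and $j$ (in whatever order the packets are sent), and since $\sum_{i<j}(i+j)=(K-1)K(K+1)/2$ over $K(K-1)$ decoding events, the APDD is exactly $(K+1)/2$ with no heavy receiver needed at all. With that computation inserted, your argument goes through: $D_\S=(K+1)/2$ for S-IDNC versus $\Dmin\leqslant\Dr=2$, giving a ratio of $(K+1)/4\to\infty$. The cleanest fix is therefore to delete $\r_1$ and the multiplicity balancing entirely.
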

\begin{proof}
To prove this, it suffices to provide a counter example. Consider a complete graph $\G(\V,\E)$ with $K$ vertices and $K(K-1)/2$ edges. For every vertex $\v_k$ we generate a data packet $\p_k$. For every edge $\e_{i,j}$ that connects $\v_i$ and $\v_j$ we generate a receiver $\r_n$ with $\w_n=\{\p_i,\p_j\}$. In the resultant $\mA$, every receiver wants two data packets.

S-IDNC prohibits to code together any two data packets that are both wanted by any receiver. In other words, every S-IDNC coding set $\M$ must satisfy $|\M\cap\w_n|\leqslant 1$ for any receiver $\r_n$. Given the above $\mA$, this restriction implies that no data packets can be coded together at all. Hence, all $K$ data packets must be broadcast uncoded alone. The resultant APDD is  $(K+1)/2$. 

Note that it is easy to show that the optimal value of APDD is at most 2. Indeed, the value of 2 can be achieved by using the RLNC technique. Thus, S-IDNC fails to provide a constant approximation ratio for the problem at hand.

\end{proof}

In conclusion, in this section we proved that RLNC is at least a $2$-approximation algorithm of $\Dmin$. By setting RLNC as a benchmark, we showed that S-IDNC fails to provide a constant approximation ratio for our problem. Indeed, RLNC is the only existing approximation algorithm, to the best of our knowledge. Therefore, the final question we are interested in is: \emph{How to overtake RLNC}?

\section{How to Overtake RLNC}
Imagine a linear NC technique that: 1) is throughput optimal as RLNC (i.e., allows every receiver to increase its DoF by one in every transmission \cite{keller:drinea:fragouli:2008}); and 2) enables early packet decodings rather than block decodings in RLNC. Such an NC technique offers an APDD lower than RLNC, and thus will be an approximation algorithm of $\Dmin$ with a  ratio lower than RLNC. To the best of our knowledge, such NC techniques have not been developed in the literature.

In this section, we propose a methodology for the development of such NC techniques. We first construct a hypergraph $\mH$ that corresponds to a given instance $\mA$. The key idea to guarantee optimal throughput and early packet decodings is to find minimal vertex covers in $\mH$.
%
%
A vertex cover is a subset $\V_C$ of $\V$ such that $|\V_C\cap\e_n|\geqslant 1$ for every hyperedge. It is minimal if it is not the superset of a smaller vertex cover, implying that $|\V_C\cap\e_n|=1$ for at least one hyperedge. Hence, every receiver wants at least one data packet from $\V_C$, and at least one receiver can instantly decode a wanted data packet from $\V_C$.

\begin{algorithm}[t]
\caption{Structure of approximation algorithms of $\Dmin$}
\label{alg:partition}
\begin{algorithmic}[1]
\STATE Initialize: SFM $\mA$;
\STATE Construct the hypergraph $\mH(\V,\E)$ of $\mA$ by mapping data packets to vertices and receivers to hyperedges;
\WHILE {Every hyperedge $\e_n$ is non-empty}
\STATE Find a minimal vertex cover $\V_C$ of $\mH$, and send a random-coded packet of data packets in $\V_C$;
\STATE Update $\mH$ by removing $\V_C$ from it following some strategy;
\ENDWHILE
\STATE Send random-coded packets of all data packets until all receivers complete broadcast.
\end{algorithmic}\label{alg:dmin}
\end{algorithm}

The core algorithmic structure of our methodology is sketched in Algorithm \ref{alg:dmin}. It generates a solution $\S$ with coding sets $\{\V_C^1,\cdots,\V_C^L,\P,\P,\cdots\}$, where $L$ is the total number of minimal vertex covers found by the algorithm. To achieve optimal throughput, the random-coded packet of every $\{\V_C^l\}_{l=1}^L$ must be able to increase every receiver's DoF by one. To this end, a proper hypergraph update strategy must be applied. The simplest strategy is to completely remove $\V_C^l$ from $\mH$ before finding $\V_C^{l+1}$. By doing so, all the vertex covers will have empty intersections, and thus serve all the receivers with different data packets. The algorithm stops at the $L$-th round when there is at least one empty hyperedge, after which point, optimal throughput is maintained by sending random-coded packets of all data packets in $\P$, as in RLNC. Hence, the solution $\S$ is throughput optimal as RLNC. Moreover, since the minimal vertex covers enable instant packet decodings, the APDD of $\S$ is better than RLNC.


The design of optimal hypergraph vertex cover algorithms and hypergraph update strategies that minimize APDD is still an open problem. However, regardless of whether optimal or heuristic algorithms/strategies are applied, solutions generated by Algorithm 1 are always throughput optimal, while also providing early packet decodings. Thus, they can approximate $\Dmin$ with ratio smaller than RLNC.

\figref{fig:d_general} compares the APDD performance of a simple realization of our methodology with RLNC and a heuristic G-IDNC \cite{sameh:valaee:globecom:2010}. In this realization, we adopt the aforementioned complete $\V_C$ removal strategy and a heuristic hypergraph vertex cover algorithm, which iteratively adds to $\V_C$ the vertex that 1) is not connected to $\V_C$; and 2) has the highest degree\footnote{The degree of a vertex is the number of hyperedges incident to it}. To generate the SFM instances, we consider a block of $K=15$ data packets and assume that each receiver wants each data packet randomly with a probability of $0.2$. The number of receivers $N\in[5,100]$. The results show that G-IDNC offers the lowest APDD when $N$ is small, but becomes worse than RLNC when $N>65$. Hence, the heuristic G-IDNC is not an approximation algorithm. Our realization (heur. VC in the figure) always outperforms RLNC. Their gap narrows down with increasing $N$.

To gain a deeper insight into the realization and performance analysis of the proposed methodology, we conduct a case study in the next subsection by considering a special type of SFM.

\begin{figure}[t]
\centering
\includegraphics[width=\linewidth]{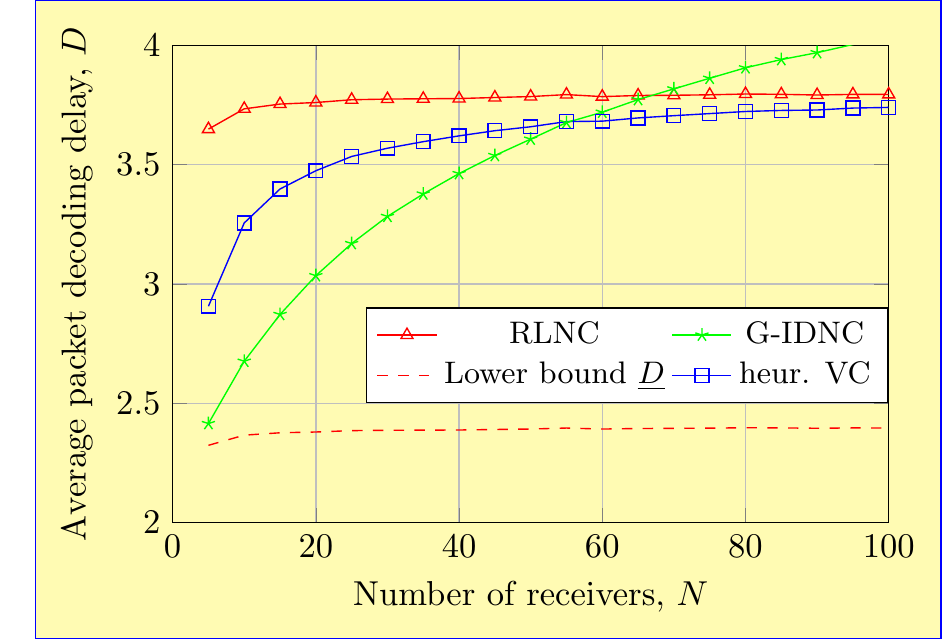}
\caption{The APDD of heuristic VC, RLNC, and G-IDNC when $K=20$.}
\label{fig:d_general}
\end{figure}
\vspace{-2mm}
\subsection*{A Case Study}
In this subsection, we design a NC technique that approximates the $\Dmin$ of a special type of SFM where every receiver wants two data packets from the packet block $\P$. Though seemingly simple, this case is highly nontrivial, because its $\Dmin$ cannot be approximated by existing APDD-reduction techniques such as S-IDNC, as shown in the proof of Theorem \ref{theo:idnc}.

Given such an SFM $\mA$, we first construct its hypergraph model by mapping data packets into vertices, and mapping receivers into hyperedges. Note that multiple receivers who want the same set of data packets are represented by one hyperedge. In addition, we weight every vertex $\v_k$ with a value of $t_k$, which is the number of receivers who want $\p_k$. We note that since every hyperedge has $|\e|=2$, the resultant hypergraph is indeed a classic graph $\G(\V,\E)$.

We then partition $\V$ into two subsets:
\begin{itemize}
\item The first subset is a minimal vertex cover $\V_C$. Since every edge is incident to $\V_C$, every receiver wants at least one data packet from $\V_C$. Denote by $\R_C$ the receivers who want two data packets from $\V_C$, and by $N_C$ their number;
\item The second subset is a set $\V_I=\V\setminus\V_C$. It is obvious that $\V_I$ is a maximal independent set, because it contains no edge (otherwise $\V_C$ is not a minimal vertex cover). Hence, every receiver wants at most one data packet from $\V_I$. We denote by $\R_I$ the set of receivers who want one data packet from $\V_I$, and by $N_I$ their number. We have $N_C+N_I=N$.
\end{itemize}
An example of such partition is demonstrated in \figref{fig:mis_example}. It has 4 data packets and 5 receivers, with 4 of them want one data packet from both $\V_I$ and $\V_C$.
\begin{figure}
\centering
\includegraphics[width=0.3\linewidth]{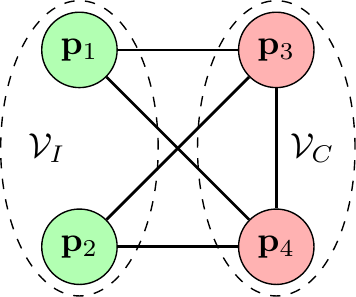}
\caption{An example of MIS partition.}
\label{fig:mis_example}
\end{figure}

We then send the following two NC packets:
\begin{itemize}
\item In the first transmission, send an RLNC packet of all data packets in $\V_C$. This allows $\R_I$ to decode one data packet, and allows $\R_C$ to increase DoF by one without decoding;
\item In the second transmission, send an RLNC packet of all data packets in $\V$. This allows $\R_I$ to decode the other data packet, and allows $\R_C$ to decode two data packets.
\end{itemize}
We call this technique maximal independent set (MIS) technique. MIS follows Algorithm \ref{alg:dmin}, and thus is an approximation algorithm of $\Dmin$. Its APDD, denoted by $D_{\mathrm{MIS}}$, is calculated as:
\begin{equation}
D_{\mathrm{MIS}}=\frac{N_I \cdot 1+ N_I\cdot2 +2N_C\cdot2}{2N}=2-\frac{N_I}{2N}
\end{equation}
which is minimized when $N_I$ is maximized. Since $N_I=\sum_{\v_k\in\V_I}t_k$, we need to find the maximum weighted independent set $\V_I$, which is NP-hard \cite{Graph_theory}. Nevertheless, even a heuristically finding $\V_I$ can offer $N_I>0$. Hence, $D_{\mathrm{MIS}}<2$ regardless of the way $\V_I$ is found.

We now derive the worst approximation rate of MIS by calculating an upper bound on $D_{\mathrm{MIS}}$. The minimum size of $\V_I$ is one, taking place when $\G$ is complete. In this case, the optimal MIS will find the solo vertex with the largest weight. Hence, $N_I/N$ is minimized when all vertices have the same weight. In this case, we have $\frac{N_I}{N}=\frac{2}{K}$. $D_{\mathrm{MIS}}$ is thus upper bounded as $D_{\mathrm{MIS}}\leqslant2-\frac{1}{K}$.  Then, by noting that $\Du=1.5$ when every receiver wants 2 data packets, we conclude that:
\begin{Theorem}
MIS is at most a $\frac{4-2/K}{3}$-approximation algorithm of $\Dmin$ when every receiver wants 2 out of $K$ packets.
\end{Theorem}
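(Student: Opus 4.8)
The plan is to chain together three facts, two of which are already in hand from the case study above: the exact delay $D_{\mathrm{MIS}} = 2 - N_I/(2N)$ achieved by the MIS technique, a universal lower bound $N_I/N \geqslant 2/K$ valid for \emph{every} SFM in which each receiver wants two packets, and the generic bound $\Dmin \geqslant \Du$. When $w_n = 2$ for all $n$, equation~\eqref{eq:du} evaluates to $\Du = \frac{\sum_n w_n^2}{2\sum_n w_n} + \frac12 = \frac{4N}{4N} + \frac12 = \frac32$, so $\Dmin \geqslant \frac32$; dividing the worst-case value of $D_{\mathrm{MIS}}$ by $\frac32$ will produce the stated ratio.

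First I would record the delay accounting behind the closed form: each of the $N_I$ receivers in $\R_I$ decodes one wanted packet after transmission~1 and the other after transmission~2 (contributing $1+2$ each), while each of the $N_C = N - N_I$ receivers in $\R_C$ decodes both wanted packets after transmission~2 (contributing $2+2$ each); dividing the total $3N_I + 4N_C$ by $\sum_n w_n = 2N$ gives $D_{\mathrm{MIS}} = 2 - N_I/(2N)$, which is decreasing in $N_I$. Hence the worst approximation ratio is governed by the instance that makes the maximum-weight independent set as light as possible, so the crux is a uniform lower bound on $N_I$. Here I would use two elementary observations: double counting receiver--packet incidences gives $\sum_{k=1}^K t_k = \sum_{n=1}^N w_n = 2N$, so the average vertex weight is $2N/K$; and since every single vertex of $\G$ is an independent set, the maximum-weight independent set $\V_I$ has weight $N_I \geqslant \max_k t_k$, which in turn is at least the average weight $\frac1K \sum_k t_k = \frac{2N}{K}$. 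Thus $N_I/N \geqslant 2/K$ for every such instance, and substituting into the closed form yields $D_{\mathrm{MIS}} \leqslant 2 - 1/K$.

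Combining the two bounds gives $D_{\mathrm{MIS}}/\Dmin \leqslant (2 - 1/K)/(3/2) = (4 - 2/K)/3$, which is the claim. The only step that is not routine bookkeeping is the uniform bound $N_I \geqslant \max_k t_k \geqslant 2N/K$: it is what guarantees that the MIS partition always turns at least a $2/K$ fraction of the receivers into instant decoders, no matter how adversarially the wanted-packet pairs are arranged, and checking that equality forces $\G$ to be complete with all weights equal confirms that this worst case is genuinely attained, so the constant $2 - 1/K$ (and hence the ratio) is tight. Everything else --- the delay formula and the evaluation $\Du = \frac32$ --- is direct computation.
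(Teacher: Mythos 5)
Your proposal is correct and follows the same overall structure as the paper's argument: the closed form $D_{\mathrm{MIS}}=2-\frac{N_I}{2N}$, a worst-case lower bound $N_I/N\geqslant 2/K$, and division by $\Du=\frac{3}{2}$. The one place where you genuinely diverge is the key bound on $N_I$. The paper argues it informally by exhibiting what it takes to be the extremal instance (``the minimum size of $\V_I$ is one, taking place when $\G$ is complete \ldots\ $N_I/N$ is minimized when all vertices have the same weight''), which identifies a bad family of instances but does not actually prove that no other instance is worse. Your averaging chain $N_I\geqslant\max_k t_k\geqslant\frac{1}{K}\sum_k t_k=\frac{2N}{K}$ (using the double count $\sum_k t_k=\sum_n w_n=2N$ and the fact that a single vertex is an independent set) proves the uniform bound for \emph{every} instance in one line, so it closes the gap left by the paper's heuristic worst-case reasoning; this is a genuine improvement in rigor at no cost in length. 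Two small caveats: your bound, like the paper's, requires $\V_I$ to be the \emph{maximum-weight} independent set (a heuristic $\V_I$ only guarantees $N_I>0$), so the theorem should be read as applying to the optimal MIS; and your parenthetical ``(and hence the ratio)'' in the tightness remark slightly overreaches, since on the complete-graph instance $\Dmin$ may well exceed $\frac{3}{2}$, so attaining $D_{\mathrm{MIS}}=2-\frac{1}{K}$ does not by itself show the approximation ratio $\frac{4-2/K}{3}$ is attained. Neither caveat affects the validity of the stated upper bound.
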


\begin{figure}[t]
\centering
\includegraphics[width=\linewidth]{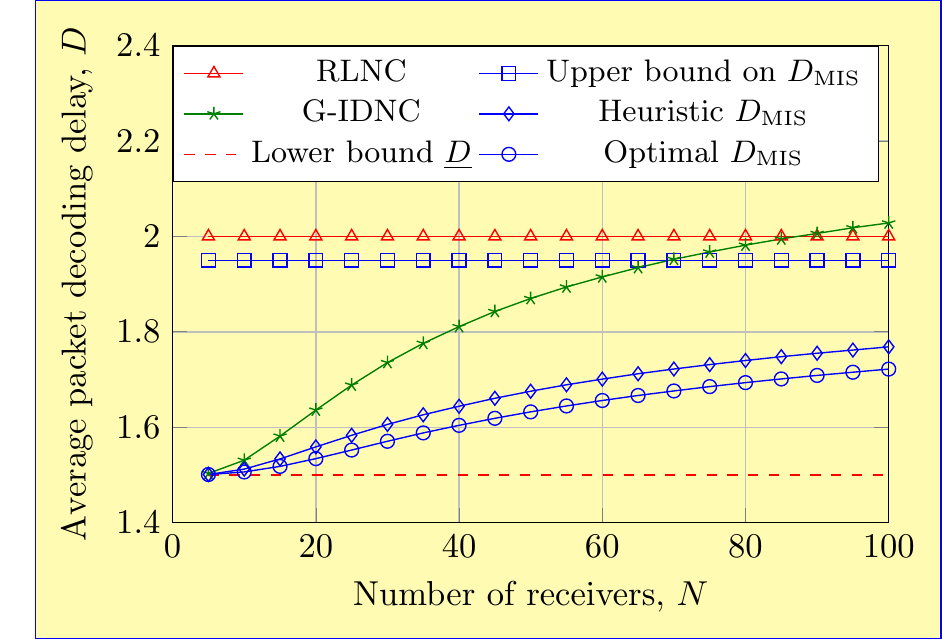}
\caption{The APDD of the proposed technique, RLNC, and G-IDNC when $K=20$ and $w=2$.}
\label{fig:d_compare}
\end{figure}
\figref{fig:d_compare} compares the simulated APDD performance of MIS with RLNC and the heuristic G-IDNC \cite{sameh:valaee:globecom:2010}. The packet block size is $K=20$. The number of receivers $N\in[5,100]$. Every receiver randomly chooses two wanted data packets. Since $K=20$, $D_{\mathrm{MIS}}$ is upper bounded by $2-\frac{1}{K}=1.95$. The optimal $D_{\mathrm{MIS}}$ is obtained by exhaustively searching the maximum weighted independent set. Both the performance of heuristic $D_{\mathrm{MIS}}$ and heuristic G-IDNC are obtained by using the heuristic maximum weighted clique (a complete subgraph of a graph) search algorithm proposed in \cite{sameh:valaee:globecom:2010}. This algorithm can be adapted for MIS because an independent set of $\G$ is a clique of the complementary graph $\overline\G$. According to the results, both the optimal and heuristic $D_{\mathrm{MIS}}$ are well below their upper bound, and are much better than both G-IDNC and RLNC. On the other hand, the APDD of G-IDNC exceeds RLNC when the number of receivers becomes large.


\section{Conclusion}
In this paper, we proved that it is NP-hard to minimize the average packet decoding delay (APDD) in packet block based wireless broadcast using linear network coding. But the minimum APDD can be approximated by RLNC with a ratio of at most 2. In order to achieve a lower approximation rario, we proposed a methodology for the design of specialized approximation algorithms that always outperform RLNC.

In the future, we are interested in designing more sophisticated realizations of the proposed NC framework. We are also interested in its extension to more general network settings, for example, when NC transmissions are subject to erasures. Besides, our hypergraph model and delay analysis may be extended to other network models such as cooperative data exchange and distributed data storage, because they also have similar types of demands on data packets.

\bibliographystyle{IEEEtran}

\end{document}